\newcommand*\diff{\mathop{}\!\mathrm{d}}
\newtheorem{thm}{Theorem}
\newtheorem{lem}{Lemma}
\newtheorem{cor}{Corollary}
\journalname{Decision Theory}
\begin{document}

\title{The risk function of the goodness-of-fit tests for tail models.
}


\author{Ingo Hoffmann        \and
        Christoph J.\ B\"orner 
}


\institute{Ingo Hoffmann \at
              \email{Ingo.Hoffmann@hhu.de}           
           \and
           Christoph J.\ B\"orner \at
	          \email{Christoph.Boerner@hhu.de\\}
	          { }\\  
              Financial Services, Faculty of Business Administration and Economics, \\ 
              Heinrich Heine University D\"usseldorf, 40225 D\"usseldorf, Germany \\
              Tel.: +49 211 81-15258\\
              Fax:  +49 211 81-15316\\
}

\date{Version: \today { }}


\maketitle
\thispagestyle{empty}
\begin{abstract}
This paper contributes to answering a question that is of crucial importance in risk management and extreme value theory: How to select the threshold above which one assumes that the tail of a distribution follows a generalized Pareto distribution. This question has gained increasing attention, particularly in finance institutions, as the recent regulative norms require the assessment of risk at high quantiles.
Recent methods answer this question by multiple uses of the standard goodness-of-fit tests. These tests are based on a particular choice of symmetric weighting of the mean square error between the empirical and the fitted tail distributions. Assuming an asymmetric weighting, which rates high quantiles more than small ones, we propose new goodness-of-fit tests and automated threshold selection procedures. We consider a parameterized family of asymmetric weight functions and calculate the corresponding mean square error as a loss function. We then explicitly determine the risk function as the finite sample expected value of the loss function. Finally, the risk function can be used to discuss the question of which symmetric or asymmetric weight function and, thus, which goodness-of-fit test should be used in a new method for determining the threshold value.
\keywords{Decision theory \and Risk function \and Goodness-of-fit tests \and Tail model}
\subclass{62C99 \and 62E17}
\end{abstract}

\section{Introduction}\label{intro}
In many disciplines, there is often a need to adapt a statistical model to the existing data to make statements about uncertain future outcomes. In particular, when assessing risks, an estimate of the major losses must be based on events that, although they have a low probability of occurrence, have a high impact. In the financial sector in particular, with its tightening regulatory requirements, models will be in demand that enable very good, qualitative and quantitative statements at high quantiles in the tail range of an underlying unknown distribution function.

Since the actual distribution of the data is often unknown, statisticians begin with a guess about the underlying statistical model for the entire value range or, more specifically, for the considered tail. They often use various distribution functions to choose the most suitable one later. In many cases, these models do not perfectly reflect the data. However, specific statistical tests can be applied to assess how good or bad a model fits the data, e.g., the Cram\'er-von Mises test \citep{cramer28, vonmises31} or Anderson-Darling test \citep{anderson52, anderson54}. 
These goodness-of-fit tests are also used in automated procedures to determine the threshold value at which the tail of the underlying distribution can be modeled using the generalized Pareto distribution \citep{bader18}.

All of these tests are based on the weighted mean square error $\hat R$, which corresponds to the loss function in decision theory \citep{aggarwal55, ferguson67}. Evaluated for a specific sample of length $n$, $\hat R_n$ calculates the weighted deviation of the modeled data from the measured data and, thus, the individual loss of accuracy by the model. At this level, the derived statistical tests above are used to assess the quality of the model. However, to be able to judge how good a statistical test is, the question to answer is how large the average loss is when all possible time series of measured data with length $n$ and unknown distribution functions are considered. An answer to this question is provided by the finite sample expectation value of the weighted mean square error ${\text E} [\hat R_n]$, which corresponds to the risk function in decision theory.

If the error is not squared but has an initially free exponent, \citet{aggarwal55} was able to explicitly calculate the risk function for this deviation error.
Different weight functions were considered for only two specific cases: the Cram\'er-von Mises and Anderson-Darling tests.
In particular, evaluating models for the upper {\it or} lower tail, weight functions are important, which enables only a stronger weighting of deviations in these areas of the distribution.
These weighting functions define the families of special tail statistics on which we focus here. The question remains of which statistics of this parameterized family should be used for the present task to establish a suitable goodness-of-fit test in an automated method for determining the threshold value.

As the main result of this analysis, we calculate the risk function for this family of tail statistics, which allows us to compare different statistics in terms of their average loss. Thus, the question of a suitable statistic for a tail-oriented goodness-of-fit test can be discussed. Our result shows that some statistics diverge and cannot be used. The results further suggest that from theoretical and practical viewpoints, the statistics first suggested by \citet{ahmad88} should be chosen as a goodness-of-fit test for analyzing the tail and evaluating a tail model. This statistic should be further investigated and used as the origin of an automated method for determining the threshold to separate the tail from the distribution.

The remainder of the paper is structured as follows: After defining the family of tail statistics in Section \ref{tailstatistics}, the corresponding risk function is explicitly calculated in Section \ref{riskfunction}. Section \ref{corollaries} summarizes some corollaries that follow from the theorem of the previous section. As an interesting side result, we define a one-parametric discrete distribution function over a finite support of non-negative integers and determine all moments of this distribution, which may be useful in a decision-theoretical problem where probabilities are to be assigned to a limited number of environmental states. The final section discusses the results and summarizes the key points.

\section{Definition of tail statistics} \label{tailstatistics} 
Let $X_1, X_2, \ldots , X_n$ be a sample of random variables with a common unknown continuous distribution function $F(x)$ and density function $f(x)$. The corresponding empirical distribution function for $n$ observations is defined as
\begin{flalign}\label{EDFEQ}
F_n(x)& = \frac 1 n \sum_{i = 1}^n {\boldsymbol 1} (X_i \leq x),
\end{flalign}
where ${\boldsymbol 1}$ is the indicator function; ${\boldsymbol 1} (X_i \leq x)$ is equal to one if $X_i\leq x$ and zero otherwise. Thus, $F_n(x) = \frac k n$ if $k$ observations are less than or equal to $x$ for $k=0,1, \ldots ,n$ \citep{kolmogorov33}.\\

As a convenient measure of the discrepancy or ''distance'' between the distribution functions $F_n(x)$ and $F(x)$, we consider the weighted mean square error
\begin{flalign}\label{WMSEEQ}
\hat R_n & = n \int_{-\infty}^{+\infty} \left(F_n(x)-F(x) \right)^2\; w(F(x))\;\diff F(x),
\end{flalign}
introduced in the context of statistical test procedures by \citet{cramer28}, \citet{vonmises31} and \citet{smirnov36}. 
The non-negative weight function $w(t)$ in Eq.\ (\ref{WMSEEQ}) is a suitable preassigned function for accentuating the difference between the distribution functions in the range where the test procedure is desired to have sensitivity. Consider the weight function
\begin{flalign}\label{WFEQ}
w(t) & = \frac{1}{t^a(1-t)^b}
\end{flalign}
for free real-valued stress parameters $a,b\geq 0$ and $t\in[0,1]$. Here, $a$ affects the weight at the lower tail, and $b$ affects the weight at the upper tail. These stress parameters, at a certain position on the distribution function, allow one to change the strength with which the deviations from the empirical distribution function at that position are weighted. Put simply, by using the stress parameters, the magnification is adjusted, with which the deviation between the distributions at a fixed position is considered.

Then, for $a=b=0$, Eq.\ (\ref{WMSEEQ}) provides the Cram\'er-von Mises statistic \citep{cramer28, vonmises31}, and when both tails are heavily weighted ($a=b=1$), it is equal to the Anderson-Darling statistic \citep{anderson52, anderson54}. The Anderson-Darling statistic simultaneously weights the difference between the  distributions more heavily at both ends of the  distribution $F(x)$.

Mixed weight functions can hinder the individual study of one tail or the other of the distribution function. In particular, in the construction of goodness-of-fit tests that focus on a tail, pure functions, which weight one side of the distribution function strongly, are beneficial. As the regulatory requirements become more stringent, statistics may become increasingly interesting, which weight the differences in either the upper or lower tail of the distribution function more strongly.
Therefore, the following weight functions should gain importance.\\
\\
The weight function for the lower tail ($a\geq 0, b=0$) is
\begin{flalign}\label{WFLTEQ}
w(t) & = \frac{1}{t^a} 
\end{flalign}
\\
The weight function for the upper tail ($a=0, b\geq 0$) is
\begin{flalign}\label{WFUTEQ}
w(t) & = \frac{1}{(1-t)^b} 
\end{flalign}

If we initially leave the stress parameters indeterminate in the calculation of the weighted mean square error Eq.\ (\ref{WMSEEQ}), two families of statistics can be derived: one family is for the lower tail, and the other is for the upper tail.
For $a=b$, these two families can be transformed into each other using coordinate transformation $Z=-X$ of the random variable.
Therefore, in the following, we only treat the statistics family for stress parameter $a$. The derived results then apply to the second family for parameter $b$.\\

With Eq.\ (\ref{WFLTEQ}), the weighted mean square error Eq.\ (\ref{WMSEEQ}) reduces to
\begin{flalign}\label{LTWMSEEQ}
\hat R_{n,a} & = n \int_{-\infty}^{+\infty} \frac{\left(F_n(x)-F(x) \right)^2}{\left(F(x)\right)^a}\; \diff F(x).
\end{flalign}
The computing formulae for this family of lower-tail statistics can be obtained by following the method given in \citet{anderson54}.

Let $x_{(1)} \leq x_{(2)}\leq \ldots \leq x_{(n)}$ be the sample values (in ascending order) obtained by ordering each realization  $x_1, x_2, \ldots , x_n$ of $X_1, X_2, \ldots , X_n$. Then, we can summarize the following calculation rules for the statistics:\\
\\
$\bullet$  {$a\neq 1,2,3$}%
\begin{flalign}\label{LTSEEQ}
\hat R_{n,a} & = \frac{2}{(1-a)(2-a)(3-a)}\; n \\\nonumber
& + \sum_{i=1}^n\left[ \frac{2}{2-a}\left(F(x_{(i)})\right)^{2-a} - \frac{2i-1}{n}\frac{1}{1-a} \left(F(x_{(i)})\right)^{1-a} \right]
\end{flalign}
Note: In the special case where $a=0$, Eq.\ (\ref{LTSEEQ}) reduces to the statistics $W_n^2\; (= \hat R_{n,0})$ proposed by \citet{cramer28} and \citet{vonmises31}:
\begin{flalign}\label{CVM}
W_n^2  & = \frac{1}{12n}  
+ \sum_{i=1}^n\left[ \frac{2i-1}{2n} - F(x_{(i)})  \right]^2
\end{flalign}
$\bullet$ {$a= 1$}%
\begin{flalign}\label{LTSEEQa1}
\hat R_{n,1} & = -\frac{3}{2} n 
+ \sum_{i=1}^n\left[ 2 F(x_{(i)}) - \frac{2i-1}{n} \ln\left(F(x_{(i)})\right) \right]
\end{flalign}
To obtain an appropriate goodness-of-fit test specifically for the tail of a distribution, the computation formulae Eq.\ (\ref{LTSEEQa1}) were first described by \citet{ahmad88} and later examined more formally by the same authors with regard to the distribution of their test statistics $AL_n^2$ \citep*{sinclair90}.\\
\\
$\bullet$ {$a=2$}%
\begin{flalign}\label{LTSEEQa2}
\hat R_{n,2} & = 
\sum_{i=1}^n\left[\frac{2i-1}{n}\frac{1}{F(x_{(i)})}
+ 2 \ln\left(F(x_{(i)})\right) \right]
\end{flalign}
$\bullet$ {$a=3$}\\
\\
For the stress parameter $a = 3$, no feasible solution can be calculated because $\hat R_{n,3}$ approaches infinity.
\section{Risk function} \label{riskfunction}
In decision theory, the weighted mean square error $\hat R_n$, which is defined in the previous section (see Eq.\ (\ref{WMSEEQ})), is generally referred to as the loss function, and the expected value of the loss function is called the risk function \citep{aggarwal55, ferguson67}:
\begin{flalign}\label{RFRn}
R_n & = {\text E} \left[ \hat R_n  \right]
\end{flalign}
For the case considered here, the risk function can be calculated explicitly. Our main result summarizes the following theorem and the complementary corollaries in section \ref{corollaries}.
\begin{thm}\label{lemmariskfunction}
	Let $\hat R_{n,a}$ be the weighted mean square error defined by Eq.\ (\ref{LTWMSEEQ}). Then, $\forall a\in \mathbb{R}^{\geq 0}$, the risk function is given by
	\begin{flalign}\label{RFLT}
	R_{n,a} = \frac{1}{(2-a)(3-a)}.
	\end{flalign}
\end{thm}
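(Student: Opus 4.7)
My plan is to interchange the expectation with the integration in Eq.~(\ref{LTWMSEEQ}) via Fubini, compute the pointwise second moment of $F_n(x) - F(x)$ from the Binomial distribution of $nF_n(x)$, and then evaluate the resulting one-dimensional integral as a Beta function after the substitution $t = F(x)$. The sample-size independence of the final answer will fall out of a clean cancellation between the prefactor $n$ in Eq.~(\ref{LTWMSEEQ}) and the $1/n$ arising from the Binomial variance, which is already a signal that the proof cannot produce anything richer than a single Beta integral.

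For fixed $x$, the indicators $\boldsymbol{1}(X_i \leq x)$ are i.i.d.\ Bernoulli with success probability $F(x)$, so $n F_n(x)$ is Binomial with parameters $n$ and $F(x)$. Hence ${\text E}[F_n(x)] = F(x)$ and
\[ {\text E}\!\left[(F_n(x) - F(x))^2\right] \;=\; \mathrm{Var}(F_n(x)) \;=\; \frac{F(x)\,(1 - F(x))}{n}. \]
Substituting this pointwise into Eq.~(\ref{LTWMSEEQ}) and applying Fubini gives
\[ R_{n,a} \;=\; \int_{-\infty}^{+\infty} \frac{1 - F(x)}{(F(x))^{\,a-1}}\,\diff F(x). \]
The change of variables $t = F(x)$, valid by the continuity of $F$ (probability integral transform), pushes the integral onto $(0,1)$ and yields the Beta integral
\[ R_{n,a} \;=\; \int_0^1 t^{1-a}(1-t)\,\diff t \;=\; B(2-a,\,2) \;=\; \frac{\Gamma(2-a)\,\Gamma(2)}{\Gamma(4-a)} \;=\; \frac{1}{(2-a)(3-a)}, \]
using $\Gamma(2) = 1$ and $\Gamma(4-a) = (3-a)(2-a)\,\Gamma(2-a)$.

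The main obstacle is convergence: the Beta integral $\int_0^1 t^{1-a}(1-t)\,\diff t$ is finite only for $a < 2$, and the Fubini step is strictly licensed only in that range; for $a \in [2,3)$ and $a = 3$ both $\hat R_{n,a}$ and $R_{n,a}$ diverge (as the paper itself already notes for $a=3$), so the closed form $1/[(2-a)(3-a)]$ on the remainder of $\mathbb{R}^{\geq 0}$ must be interpreted by analytic continuation in $a$. I would therefore present the chain of equalities under the restriction $0 \le a < 2$, verify the well-known special cases $R_{n,0} = 1/6$ (Cram\'er--von Mises) and $R_{n,1} = 1/2$ (\citealp{ahmad88}) as sanity checks, and separately flag the two exceptional values $a = 2,3$.
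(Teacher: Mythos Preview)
Your argument is correct and is genuinely different from, and considerably more economical than, the paper's proof. The paper passes through the computing formula Eq.~(\ref{LTSEEQ}), replaces each $F(x_{(i)})$ by the $i$th uniform order statistic, computes the moments ${\text E}[u_{(i)}^{2-a}]$ and ${\text E}[u_{(i)}^{1-a}]$ as ratios of beta functions, and then collapses the resulting $i$-sums via a bespoke combinatorial identity (Lemma~\ref{identlem1}) involving Stirling numbers and the Chu--Vandermonde theorem; special cases $a=1,2,3$ are treated separately. You bypass all of this by taking the expectation \emph{before} reducing to order statistics: Tonelli (non-negative integrand) lets you swap ${\text E}$ and $\int$, the Binomial variance gives the pointwise second moment, and a single Beta integral finishes. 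The cancellation of $n$ is immediate rather than emerging from algebraic simplification of $H_0$ and $H_1$. What the paper's longer route buys is Lemma~\ref{identlem1} itself, which has independent value (it underlies Corollary~\ref{Stochastic}); your route buys transparency and shows directly why the answer cannot depend on $n$.

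One refinement: Tonelli is in fact licensed for all $a\ge 0$, not just $a<2$, because the integrand is non-negative; the interchange yields $R_{n,a}=\int_0^1 t^{1-a}(1-t)\,\diff t$ unconditionally, and this integral is $+\infty$ for $a\ge 2$. So your computation actually \emph{proves} the divergence at $a\ge 2$ rather than merely flagging it, and the phrase ``analytic continuation'' is the right description for how the closed form $1/[(2-a)(3-a)]$ is to be read on $[2,\infty)\setminus\{2,3\}$. This matches the paper's own handling of $a=2$ and $a=3$ as poles, and in fact clarifies that for $2<a<3$ the negative value of the formula cannot be a literal expectation.
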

\begin{proof}
Using the transformation $u=F(x)$, the lower tail statistics can be expressed in terms of $u \in [0, 1]$, and $u_{(1)} \leq u_{(2)}\leq \ldots \leq u_{(n)}$ is an ordered sample of size $n$ from a continuous uniform distribution over the interval $[0, 1]$. The expectation in Eq.\ (\ref{RFRn}) must be taken with respect to this distribution. Since the distribution of the $i$th-order statistic $U_{(i)}$ in a random sample of size $n$ from the uniform distribution over the interval $[0, 1]$ is a beta distribution with the following probability density 
\begin{flalign}\label{OSPD}
p(u) = \frac{1}{B(i, n-i+1)} u^{i-1}(1-u)^{n-i}
\end{flalign}
the expectation value for $\hat R_{n,a}$ can be calculated as follows:\\
\\
$\bullet$ {$a\neq 1,2,3$}\\
\begin{flalign}\label{lem2Profa}
R_{n,a} & = {\text E} \left[ \hat R_{n,a}\right] \\ \nonumber
& = \frac{2n}{(1-a)(2-a)(3-a)}\;  \\ \nonumber
& \qquad + \sum_{i=1}^n \frac{2}{2-a}{\text E} \left[u_{(i)}^{2-a}\right] \\ \nonumber
& \qquad - \sum_{i=1}^n \frac{2i-1}{n}\frac{1}{1-a} {\text E} \left[u_{(i)}^{1-a}\right] \\ \nonumber
& = \frac{2n}{(1-a)(2-a)(3-a)}\;  \\ \nonumber
& \qquad + \sum_{i=1}^n \frac{2}{2-a}\frac{\int_0^1 u^{i+1-a}(1-u)^{n-i} \diff u }{B(i, n-i+1)} \\ \nonumber
& \qquad - \sum_{i=1}^n \frac{2i-1}{n}\frac{1}{1-a} \frac{\int_0^1 u^{i-a}(1-u)^{n-i} \diff u }{B(i, n-i+1)} \\ \nonumber
& = \frac{2n}{(1-a)(2-a)(3-a)}\;  \\ \nonumber
& \qquad + \sum_{i=1}^n \frac{2}{2-a}\frac{B(i+2-a, n-i+1) }{B(i, n-i+1)} \\ \nonumber
& \qquad - \sum_{i=1}^n \frac{2i-1}{n}\frac{1}{1-a} \frac{B(i+1-a, n-i+1)}{B(i, n-i+1)}  \nonumber
\end{flalign}
To evaluate the remaining sums, we use the following lemma.
\begin{lem}\label{identlem1}
\begin{flalign}\label{ident1}
m_k(\nu) & \stackrel{\text{\rm def}}{=} \frac{\nu+1}{n}\sum_{i=1}^n i^k \;\frac{B(i+\nu, n-i+1) }{B(i, n-i+1)} \\ \nonumber
		 & = \sum_{l=0}^{k}S_{k+1,l+1}\; \frac{\nu+1}{\nu+1+l} \;(n-1)_{(l)} ,
\end{flalign}
where $\nu\in \mathbb{R}$, $k\in\mathbb{N}$, $(n-1)_{(l)}$ is the Pochhammer notation for falling factorials and $S_{k,l}$ are the Stirling numbers of the second kind {\rm \citep{abramowitz14}}.
\end{lem}
\begin{proof}[Lemma \ref{identlem1}]
To begin, the beta functions $B(\cdot,\cdot)$ of Eq.\ (\ref{ident1}) are expressed in terms of the gamma function $\Gamma(\cdot)$ \citep{abramowitz14}. Simplifying the fraction yields
\begin{flalign}\label{ident1proof1}
	m_k(\nu) & = 	(\nu+1)\; \sum_{i=1}^n i^k \;\frac{\Gamma(n)}{\Gamma(i)} \frac{\Gamma(i-1+\nu+1)}{\Gamma(\phantom{-{}}n \phantom{1.}+\nu +1)}.
\end{flalign}
Depending on $\nu$, the possibly resulting poles must be considered, and the gamma function should be considered in its analytic continuation $ \Gamma(x+\alpha)= (x)^{(\alpha)}\Gamma(x)$, where $(x)^{(\alpha)}$ is the Pochhammer notation for rising factorials \citep{abramowitz14}. After simplifying the fraction, it follows that
\begin{flalign}\label{ident1proof2}
m_k(\nu) = (\nu+1)\; \sum_{i=1}^n i^k\;\frac{\Gamma(n)}{\Gamma(i)} \frac{1}{(\nu+1+[i-1])^{(n-[i-1])}}.
\end{flalign}
Using the identity $(x)^{(\alpha)} = (x)^{(\beta)} (x+\beta)^{(\alpha- \beta)} $ results in
\begin{flalign}\label{ident1proof3}
m_k(\nu) = (\nu+1)\; \sum_{i=1}^n i^k\; \frac{\Gamma(n)}{\Gamma(i)} \frac{(\nu+1)^{(i-1)\phantom{n}}}{(\nu+1)^{(n)\phantom{i-1}}}.
\end{flalign}	
Remember that $\frac{\Gamma(n)}{\Gamma(i)} = \binom{n}{i} \frac{i}{n} (n-i)! = \binom{n-1}{i-1} (1)^{(n-i)}$. Then,
\begin{flalign}\label{ident1proof4}
m_k(\nu) = \frac{(\nu+1)^{\phantom{(n)}}}{(\nu+1)^{(n)}}\; \sum_{i=1}^n i^k\;  \binom{n-1}{i-1} (1)^{(n-i)} (\nu+1)^{(i-1)}.
\end{flalign}
Now, $i^k$ is decomposed into a sum of the falling factorials, where the coefficients consist of Stirling numbers of the second kind $ i^k = \sum_{l=0}^{k} S_{k,l} (i)_{(l)}$. Using the appropriate numbering of the sum with an appropriate extension of the terms gives
\begin{flalign}\label{ident1proof5}
m_k(\nu)  &=  \frac{(\nu+1)^{\phantom{(n)}}}{(\nu+1)^{(n)}} \times \\ \nonumber \; &\sum_{i=1}^n \sum_{l=1}^{k+1} S_{k+1,l} (n-1)_{(l-1)} \frac{(i\phantom{n}\mspace{-7mu}-1)_{(l-1)}}{(n\phantom{i}\mspace{-7mu}-1)_{(l-1)}}  \binom{n-1}{i-1} (1)^{(n-i)} (\nu+1)^{(i-1)}. 
\end{flalign}
By changing the order of the sums and truncating the binomial coefficient, the above equation reduces to
\begin{flalign}\label{ident1proof6}
m_k(\nu)  &=  \frac{(\nu+1)^{\phantom{(n)}}}{(\nu+1)^{(n)}} \times \\ \nonumber \; &\sum_{l=1}^{k+1} S_{k+1,l} (n-1)_{(l-1)} \sum_{i=1}^n \binom{n-l}{i-l} (1)^{(n-i)} (\nu+1)^{(i-1)}. 
\end{flalign}
By renumbering the last sum, splitting the term $ (\nu+1)^{(l-1)}$ and using Chu-Vandermonde theorem \citep[Ch.\ 18]{oldham09}, this equation reduces to
\begin{flalign}\label{ident1proof7}
m_k(\nu)  &=  \frac{(\nu+1)^{\phantom{(n)}}}{(\nu+1)^{(n)}} \times \\ \nonumber \; &\sum_{l=1}^{k+1} S_{k+1,l} (n-1)_{(l-1)} (\nu+1)^{(l-1)} (\nu+1+l)^{(n-l)}. 
\end{flalign}
After renumbering of the remaining sum and multiplying the raising factorials, Eq.\ (\ref{ident1}) follows immediately. \hfill {\it (Lemma \ref{identlem1})}\qed
\end{proof}
With Lemma \ref{identlem1}, we continue the proof of Theorem \ref{lemmariskfunction}. 
Let 
\begin{flalign}\label{lem2ProfDef}
H_k(\nu) & \stackrel{\text{\rm def}}{=} \frac{n}{\nu+1} m_k(\nu).
\end{flalign}
Now, Eq.\ (\ref{lem2Profa}) becomes
\begin{flalign}\label{lem2Prof2}
R_{n,a} & = \frac{2n}{(1-a)(2-a)(3-a)}\;  \\ \nonumber
& \qquad +  \phantom{\frac{1}{n}}\frac{2}{2-a} H_0(2-a)  \\ \nonumber
& \qquad +  \frac{1}{n} \frac{1}{1-a}          H_0(1-a)  \\ \nonumber
& \qquad  - \frac{2}{n} \frac{1}{1-a}          H_1(1-a) .
\end{flalign}
After a few algebraic transformations, this equation reduces to Eq.\ (\ref{RFLT}).\\
\\
Finally, let us examine the special cases:\\
\\
$\bullet$ {$a=1$}\\
\begin{flalign}\label{lem2Profa1}
R_{n,1} & = {\text E} \left[ \hat R_{n,1}\right] \\ \nonumber
& = -\frac{3n}{2} + \sum_{i=1}^n 2\;{\text E} \left[u_{(i)}\right] 
- \sum_{i=1}^n \frac{2i-1}{n}\; {\text E} \left[\ln{u_{(i)}}\right] \\ \nonumber
& = -\frac{3n}{2} + \sum_{i=1}^n 2\frac{B(i+1, n-i+1)}{B(i, n-i+1)} 
- \sum_{i=1}^n \frac{2i-1}{n}  \Big( \psi(i) - \psi(n+1) \Big) \\ \nonumber
& = -\frac{3n}{2} + 2 H_0(1) - \sum_{i=1}^n \frac{2i-1}{n}  \Big( \psi(i) - \psi(n+1) \Big)
\end{flalign}
For the last sum, we use the computation formulae presented by 
\citet[Eq.\ (59) therein]{aggarwal55}, with $\psi(i)$ being the digamma function, cf.\ \citet{abramowitz14}. Then,
\begin{flalign}\label{lem2Profa12}
R_{n,1} & = -\frac{3n}{2} + n   +\frac{n}{2} +\frac{1}{2} \\ \nonumber
& = \frac{1}{2}
\end{flalign}
This result is also yielded by Eq.\ (\ref{RFLT}) for $a = 1$.\\
\\
\\
\\
\\
\\
\\
\\
$\bullet$ {$a=2$}\\
\begin{flalign}\label{lem2Profa2} 
R_{n,2} & = {\text E} \left[ \hat R_{n,2}\right] \\ \nonumber
& = \sum_{i=1}^n \frac{2i-1}{n}{\text E} \left[\frac{1}{u_{(i)}}\right] 
+ \sum_{i=1}^n 2\; {\text E} \left[\ln{u_{(i)}}\right] \\ \nonumber
& = \sum_{i=1}^n \frac{2i-1}{n} \frac{B(i-1, n-i+1)}{B(i, n-i+1)} 
+ \sum_{i=1}^n 2\; \Big( \psi(i) - \psi(n+1) \Big) \\ \nonumber
& = \frac{1}{n} \Big(2 H_1(-1) - H_0(-1)\Big) -2n \\ \nonumber
\end{flalign}
The last expression shows that the risk function becomes infinite because $H_0(\nu)$ and $H_1(\nu)$ have a pole for $\nu = -1$ that remains even after the subtraction. This result is also described by the pole in Eq.\ (\ref{RFLT}) when $a = 2$.\\
\\
$\bullet$ {$a=3$}\\
\\
In Section \ref{tailstatistics}, we show that $\hat R_{n,3}$ approaches infinity. This also applies to the expected value $R_{n,3}$. This result is described by the pole in Eq.\ (\ref{RFLT}) when $a = 3$.
	{  } 	\hfill {\it (Theorem \ref{lemmariskfunction})} \qed
\end{proof} 

\section{Corollaries}\label{corollaries}
Because of the symmetry of the two families of statistics from Section \ref{tailstatistics}, we note the following:
\begin{cor}\label{RFUTCorrolar}
	Let $\hat R_{n,b}$ be the weighted mean square error defined by Eq.\ (\ref{WMSEEQ}) with the weight function in Eq.\ (\ref{WFUTEQ}). Then, $\forall b\in \mathbb{R}^{\geq 0}$, the risk function is
	\begin{flalign}\label{RFUT}
	R_{n,b} = \frac{1}{(2-b)(3-b)}.
	\end{flalign}
\end{cor}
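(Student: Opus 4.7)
The plan is to reduce Corollary \ref{RFUTCorrolar} to Theorem \ref{lemmariskfunction} through the symmetry remark already made in Section \ref{tailstatistics}, that the two families of statistics differ only by a reflection of the random variable. Concretely, I would introduce the transformation $Z_i = -X_i$, so that $Z_1,\ldots,Z_n$ are i.i.d.\ with continuous CDF $G(z) = 1 - F(-z)$ (using continuity of $F$), and the empirical CDF $G_n$ of the $Z_i$ satisfies $G_n(z) = 1 - F_n((-z)^-)$ almost surely. Consequently, for almost every realization, $G_n(z) - G(z) = -(F_n(-z) - F(-z))$, and in particular $(G_n(z)-G(z))^2 = (F_n(-z)-F(-z))^2$.

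Next I would substitute $z = -x$ into the integral defining $\hat R_{n,b}$ with weight $w(t) = 1/(1-t)^b$. Since $1 - F(-z) = G(z)$ and $\diff F(-z) = -\diff G(z)$, the integrand $(F_n(-z)-F(-z))^2/(1-F(-z))^b$ pulled back to $z$ gives precisely
\begin{flalign*}
\hat R_{n,b} = n \int_{-\infty}^{+\infty} \frac{(G_n(z)-G(z))^2}{G(z)^b}\,\diff G(z),
\end{flalign*}
which is exactly the lower-tail statistic of Eq.\ (\ref{LTWMSEEQ}) applied to the sample $Z_1,\ldots,Z_n$ with stress parameter $a=b$. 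Taking expectations with respect to the joint law of the $Z_i$ (which is a legitimate sample from the continuous distribution $G$), Theorem \ref{lemmariskfunction} applies and yields $\mathrm{E}[\hat R_{n,b}] = 1/((2-b)(3-b))$.

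I do not expect any serious obstacle here: the core identity to verify is the almost-sure relation $G_n(z) = 1 - F_n((-z)^-)$ and the change-of-variables formula, both of which are routine. The only subtle point is the handling of the measure-zero jump set of $F_n$, which is harmless in the integral and hence in the expectation. Alternatively, if one prefers to bypass the reflection entirely, one can repeat the uniform substitution $u = F(x)$ used in the proof of Theorem \ref{lemmariskfunction} directly in $\hat R_{n,b}$, follow it by $v = 1-u$, and observe that $V_i = 1 - U_i$ is again uniform on $[0,1]$, so the same expectation-of-beta-integrals computation applies verbatim with $\nu = 1-b,\,2-b$ in place of $1-a,\,2-a$. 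Either route delivers Eq.\ (\ref{RFUT}) without further work.
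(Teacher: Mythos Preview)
Your proposal is correct and follows exactly the approach the paper uses: the paper's proof is the one-line remark ``By direct calculation with regard to the coordinate transformation $Z=-X$ of the random variable,'' which is precisely the reflection argument you spell out in detail. Your alternative route via $u=F(x)$ and then $v=1-u$ is also valid and amounts to the same symmetry observation carried out after the probability integral transform.
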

\begin{proof}
	By direct calculation with regard to the coordinate transformation $Z=-X$ of the random variable. \qed
\end{proof}
In the special cases of the Cram\'er-von Mises statistic and the Anderson-Darling statistic, the following two corollaries hold:
\begin{cor}\label{CVMCorrolar}
	Let $W_n^2=\hat R_{n,\mathrm{CM}}$ (Eq.\ (\ref{WMSEEQ}) with weight functions Eq.\ (\ref{WFLTEQ}) for $a=0$) be the Cram\'er-von Mises statistic Eq.\ (\ref{CVM}). Then, the risk function is 
	\begin{flalign}\label{CVMRF}
	R_{n,\mathrm{CM}} = {\text E} \left[ W_n^2  \right]= \frac{1}{6}.
	\end{flalign}
\end{cor}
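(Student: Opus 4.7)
The plan is to derive Corollary~\ref{CVMCorrolar} as an immediate specialization of Theorem~\ref{lemmariskfunction}. I first note that the Cramér-von Mises statistic is, by the remark following Eq.~(\ref{LTSEEQ}), exactly $W_n^2 = \hat R_{n,0}$, i.e., the member of the lower-tail family Eq.~(\ref{LTWMSEEQ}) obtained from the weight function Eq.~(\ref{WFLTEQ}) with stress parameter $a=0$. Thus the corollary is a statement about $R_{n,0} = \mathrm{E}[\hat R_{n,0}]$.

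Since $a=0 \in \mathbb{R}^{\geq 0}$ and $a=0$ avoids the pole values $a\in\{2,3\}$ of the general formula (\ref{RFLT}), Theorem~\ref{lemmariskfunction} applies directly. Substituting $a=0$ into Eq.~(\ref{RFLT}) yields
\begin{flalign*}
R_{n,\mathrm{CM}} \;=\; \mathrm{E}\!\left[ W_n^2 \right] \;=\; R_{n,0} \;=\; \frac{1}{(2-0)(3-0)} \;=\; \frac{1}{6},
\end{flalign*}
which is the claim.

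There is essentially no obstacle: the corollary is a pure evaluation at $a=0$ of the master formula already proved in the theorem. The only thing worth verifying is that the case $a=0$ is indeed covered by the first branch of the proof of Theorem~\ref{lemmariskfunction} (the branch for $a \neq 1,2,3$), which it is, so no appeal to the special-case computations for $a=1,2,3$ is needed.
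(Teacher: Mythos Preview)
Your proof is correct and follows essentially the same approach as the paper: identify $W_n^2=\hat R_{n,a=0}$ and evaluate the risk formula~(\ref{RFLT}) of Theorem~\ref{lemmariskfunction} at $a=0$ to obtain $\tfrac{1}{(2)(3)}=\tfrac{1}{6}$. Your additional remark that $a=0$ falls in the generic branch $a\neq 1,2,3$ is a harmless clarification not present in the paper's one-line proof.
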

\begin{proof}
	We have $\hat R_{n,\mathrm{CM}} = \hat R_{n,a=0}$. Hence,
	\begin{flalign}\label{CMRFproof}
	R_{n,\mathrm{CM}} = {\text E} \left[ W_n^2  \right] = {\text E} \left[ \hat R_{n,\mathrm{CM}}  \right] = {\text E} \left[ \hat R_{n,a=0}   \right] \stackrel{\text{Eq.\ (\ref{RFLT})}}{=} \frac{1}{6}.
	\end{flalign}
	Note: The risk function calculated here is in accordance with the result of \citet[p.\ 453 below]{aggarwal55}. Note that \citet{aggarwal55} defined the loss function without multiplication by $n$.\qed
\end{proof}
\begin{cor}\label{ADCorrolar}
	Let $A_n^2=\hat R_{n,\mathrm{AD}}$ be the Anderson-Darling statistic (Eq.\ (\ref{WMSEEQ}) with the weight function in Eq.\ (\ref{WFEQ}) for $a=b=1$). Then, the risk function is 
	\begin{flalign}\label{ADRF}
	R_{n,\mathrm{AD}} = {\text E} \left[ A_n^2  \right] = 1.
	\end{flalign}
\end{cor}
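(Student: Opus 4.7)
The plan is to exploit the partial fraction decomposition of the Anderson-Darling weight. Write
\begin{flalign*}
\frac{1}{t(1-t)} = \frac{1}{t} + \frac{1}{1-t},
\end{flalign*}
so that the integrand in the Anderson-Darling loss splits additively into the lower-tail weight of Eq.\ (\ref{WFLTEQ}) at $a=1$ and the upper-tail weight of Eq.\ (\ref{WFUTEQ}) at $b=1$. By linearity of the integral, this gives the pointwise identity $\hat R_{n,\mathrm{AD}} = \hat R_{n,a=1} + \hat R_{n,b=1}$.

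Next, I would take expectations and use linearity:
\begin{flalign*}
R_{n,\mathrm{AD}} = \mathrm{E}\left[\hat R_{n,a=1}\right] + \mathrm{E}\left[\hat R_{n,b=1}\right].
\end{flalign*}
At this point Theorem \ref{lemmariskfunction} applied at $a=1$ gives $\mathrm{E}[\hat R_{n,a=1}] = \frac{1}{(2-1)(3-1)} = \frac{1}{2}$, and Corollary \ref{RFUTCorrolar} applied at $b=1$ gives $\mathrm{E}[\hat R_{n,b=1}] = \frac{1}{2}$. Adding them yields $R_{n,\mathrm{AD}} = 1$, as claimed.

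There is no real obstacle: both summands are finite (the poles of $\frac{1}{(2-a)(3-a)}$ lie at $a = 2, 3$, so $a=b=1$ is safely inside the admissible range), and the decomposition is justified pointwise before integration, so no dominated-convergence subtlety arises. The only thing to be a little careful about is to verify that one is really allowed to split the Anderson-Darling integrand, but since the weight is a finite sum of two non-negative functions and the integral of each piece is already proven to be finite in expectation by Theorem \ref{lemmariskfunction} and Corollary \ref{RFUTCorrolar}, Fubini/Tonelli applies without incident.
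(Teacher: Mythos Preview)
Your proposal is correct and follows essentially the same route as the paper: the paper also uses the partial fraction identity $\frac{1}{t(1-t)}=\frac{1}{t}+\frac{1}{1-t}$ to split $\hat R_{n,\mathrm{AD}}=\hat R_{n,a=1}+\hat R_{n,b=1}$, then applies Theorem~\ref{lemmariskfunction} and Corollary~\ref{RFUTCorrolar} to obtain $\tfrac{1}{2}+\tfrac{1}{2}=1$. Your added remarks about finiteness and Fubini/Tonelli are fine but not strictly needed beyond what the paper assumes.
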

\begin{proof}
	Because of the identity $w(t) = \frac{1}{t(1-t)}= \frac{1}{t} + \frac{1}{1-t}$, the expression $\hat R_{n,\mathrm{AD}} = \hat R_{n,a=1} + \hat R_{n,b=1} $ is equal to the well-known Anderson-Darling statistic $A_n^2$ \citep{anderson52, anderson54}. Hence,
	\begin{flalign}\label{ADRFproof}
	R_{n,\mathrm{AD}} = {\text E} \left[ A_n^2  \right] = {\text E} \left[ \hat R_{n,\mathrm{AD}}  \right] ={\text E} \left[ \hat R_{n,a=1}\right] + {\text E} \left[ \hat R_{n,b=1}   \right] \stackrel{\text{Eqn.\ (\ref{RFLT}, \ref{RFUT})}}{=} 1.
	\end{flalign}
	Note: The risk function calculated here is in accordance with the result of \citet[p.\ 461 above]{aggarwal55}.\qed
\end{proof}
For further considerations in the context of decision theory, the following corollary and the discrete probability distribution defined in it may be of interest because they can be used as a parametric probability model in a decision problem for a finite number of environmental states.
\begin{cor}\label{Stochastic} 
	Let $X$ be a discrete random variable with finite range $i=1,\ldots, n$. Then,	
	\begin{flalign}\label{ProbVector}
	p_i(\nu) & \stackrel{\text{\rm def}}{=} \frac{\nu+1}{n} \frac{B(i+\nu, n-i+1) }{B(i, n-i+1)}, 
	\end{flalign}
	where $\nu\in \mathbb{R}^{>-1}$, and $n \in\mathbb{N}^{>0}$ assigns a probability to each value in the range of $X$. The tuple $\vec p(\nu) \in [0, 1]^{n} $ with entries $p_i(\nu) \in [0, 1]$ is a probability vector that, depending on $\nu$, forms a family of discrete probability distributions on the finite support of non-negative integers with a cumulative distribution function
	\begin{flalign}\label{CDF}
	F(s;\nu)  & = \frac{B(n,\nu+1)}{B(s,\nu+1)}
	\end{flalign}
	where $s=1,\ldots, n$.\\
	\\
	For the discrete random variable, the following applies:
	\begin{enumerate}
		\item $k$-Moment
		\begin{flalign}\label{MomX}
		m_k(\nu)  \phantom{\text{\rm Var E}}\hspace{-2.0em}= 
		\sum_{l=0}^{k}S_{k+1,l+1}\; \frac{\nu+1}{\nu+1+l} \;(n-1)_{(l)}
		\end{flalign}
		\item Expectation
		\begin{flalign}\label{ExpX}
		{\text{\rm E}}\left[ X\right] \phantom{\text{\rm Var}m_k}\hspace{-2.0em} = 
		1 + \frac{\nu+1}{\nu+2}(n-1)
		\end{flalign}
		\item Variance
		\begin{flalign}\label{VarX}
		{\text{\rm Var}} \left[ X\right] \phantom{\text{\rm E}m_k}\hspace{-2.0em} = \frac{(\nu+1)(\nu+n+1)(n-1)}{(\nu+2)^2(\nu+3)}
		\end{flalign}
		
	\end{enumerate}
\end{cor}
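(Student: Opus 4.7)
The plan is to prove the four assertions in sequence, with Lemma \ref{identlem1} providing the essential machinery and the rest reducing to standard Gamma-function manipulations. Nonnegativity of $p_i(\nu)$ is immediate: for $\nu > -1$ every factor on the right-hand side of Eq.\ (\ref{ProbVector}) is strictly positive. Normalization follows from Lemma \ref{identlem1} at $k = 0$, which yields $\sum_{i=1}^n p_i(\nu) = m_0(\nu) = S_{1,1}\cdot\frac{\nu+1}{\nu+1}\cdot(n-1)_{(0)} = 1$, using $S_{1,1}=1$ and the empty-product convention for the falling factorial.

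For the cumulative distribution function, I would verify Eq.\ (\ref{CDF}) by showing it telescopes correctly to $p_s(\nu)$. Writing $B(s,\nu+1) = \Gamma(s)\Gamma(\nu+1)/\Gamma(s+\nu+1)$ and applying the recurrence $\Gamma(s+\nu+1) = (s+\nu)\Gamma(s+\nu)$, the difference $F(s;\nu) - F(s-1;\nu)$ for $s \geq 2$ reduces to
\[
\frac{(\nu+1)(n-1)!\,\Gamma(s+\nu)}{\Gamma(s)\,\Gamma(n+\nu+1)},
\]
and the same Gamma manipulations rewrite $p_s(\nu)$ as the identical expression. The base case $F(1;\nu) = (\nu+1)B(n,\nu+1) = p_1(\nu)$ is a direct check using $B(1,\nu+1)=1/(\nu+1)$, and $F(n;\nu) = 1$ is visible by inspection; together these establish Eq.\ (\ref{CDF}).

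The $k$-th moment formula Eq.\ (\ref{MomX}) is essentially a restatement of Lemma \ref{identlem1}, since by the definition of $p_i(\nu)$ we have $\text{E}[X^k] = \sum_{i=1}^n i^k p_i(\nu) = m_k(\nu)$. The expectation Eq.\ (\ref{ExpX}) then follows by evaluating at $k=1$ with $S_{2,1} = S_{2,2} = 1$. For the variance I would compute $m_2(\nu)$ from Eq.\ (\ref{MomX}) using $S_{3,1} = 1$, $S_{3,2} = 3$, $S_{3,3} = 1$, form $\text{Var}[X] = m_2(\nu) - m_1(\nu)^2$, and collect over the common denominator $(\nu+2)^2(\nu+3)$. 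The $\nu^2$ terms in the numerator cancel, leaving a linear factor in $\nu$ that simplifies to $\nu + n + 1$; combined with the prefactor $(\nu+1)(n-1)$ this produces Eq.\ (\ref{VarX}).

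The only step requiring genuine attention is the telescoping identity for the CDF, since it is the one claim not handed to us directly by Lemma \ref{identlem1}; however, once the Gamma recurrence is invoked, it reduces to a single algebraic check between two rational expressions in $\Gamma$-factors. The variance simplification is the longest piece of arithmetic but presents no conceptual obstacle — it is precisely the routine polynomial simplification of a rational function in $\nu$ and $n$, and the vanishing of the $\nu^2$ coefficient serves as a useful internal consistency check.
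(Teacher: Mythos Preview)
Your proposal is correct and agrees with the paper on moments, expectation, and variance, all of which are drawn directly from Lemma~\ref{identlem1}. The one genuine methodological difference is the CDF. The paper establishes Eq.~(\ref{CDF}) by \emph{direct summation}: it rewrites $p_i(\nu)$ in the Gamma form of Eq.~(\ref{ident1proof1}), then evaluates $\sum_{i=1}^{s}\Gamma(i+\nu)/\Gamma(i)$ via the same binomial--rising-factorial manipulation (culminating in the Chu--Vandermonde identity) used inside Lemma~\ref{identlem1}. You instead \emph{verify} the closed form by telescoping: computing $F(s;\nu)-F(s-1;\nu)$ and matching it against $p_s(\nu)$, together with the boundary checks $F(1;\nu)=p_1(\nu)$ and $F(n;\nu)=1$. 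Your route is shorter and avoids Chu--Vandermonde entirely, at the price of assuming the target formula in advance; the paper's route is constructive and shows where Eq.~(\ref{CDF}) comes from. Your nonnegativity argument is also slightly leaner than the paper's, which passes through the alternative form Eq.~(\ref{DistrAltern}); your observation that $i+\nu>0$ and $n-i+1\geq 1$ keep both Beta arguments positive already suffices.
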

\begin{proof} 
Eq.\ (\ref{ident1}) of Lemma \ref{identlem1} corresponds to the $k$-moments of the discrete probability distribution in Eq.\ (\ref{MomX}) and proves for $k=0$ that $\sum_i p_i(\nu) =1$. 
The alternative form
\begin{flalign}\label{DistrAltern}
	 p_i(\nu) = \frac{\Gamma(n)}{\Gamma(i)} \frac{\phantom{(}\nu+1\phantom{)^{(n-i+1)}}}{(\nu+i)^{(n-i+1)}}
\end{flalign}
(cf.\ Eq. (\ref{ident1proof2}) with $k=0$) shows that $p_i(\nu) \geq 0$ for $\nu >-1$.	 
Eq.\ (\ref{ident1}) for $k=1$ gives the first moment of the discrete probability distribution and is equal to the expectation value in Eq.\ (\ref{ExpX}). 
With ${\text{\rm Var}} \left[ X\right] = {\text{\rm E}} \left[ X^2\right] - ({\text{\rm E}} \left[ X\right])^2$, only the calculation of the second moment ${\text{\rm E}} \left[ X^2\right]$ remains. This calculation can be performed quickly using equation (\ref{ident1}). After summarizing the terms, Eq.\ (\ref{VarX}) follows. \\
The cumulative distribution function is obtained when all $p_i(\nu)$ are summed for $i=1,\ldots,s$; where $s \leq n$. Beginning with the alternative form of $p_i(\nu)$ (cf.\ Eq.\ (\ref{ident1proof1})) and using the techniques described in the proof of Lemma \ref{identlem1}, the following holds
\begin{flalign}\label{CDFproof}
F(s;\nu)  & = \Gamma(n)\frac{(\nu+1)}{\Gamma(n +\nu +1)}\;\; \sum_{i=1}^s \frac{\Gamma(i+\nu)}{\Gamma(i)}. \\ \nonumber
& = \frac{\Gamma(n)}{\Gamma(s)} \frac{(\nu +1)}{\Gamma(n +\nu +1)} \; \sum_{i=1}^s \frac{\Gamma(s)}{\Gamma(i)} (\nu)^{(i)} \Gamma(\nu) \\ \nonumber
& = \frac{\Gamma(n)}{\Gamma(s)} \frac{(\nu +1) \Gamma(\nu)}{\Gamma(n +\nu +1)} \; \sum_{i=1}^s \binom{s-1}{i-1} (1)^{(s-i)} (\nu)^{(i)} \\ \nonumber
& = \frac{\Gamma(n)}{\Gamma(s)} \frac{(\nu+2)^{(s-1)}(\nu +1) \nu \Gamma(\nu)}{\Gamma(n +\nu +1)} \\ \nonumber
& =\frac{\Gamma(n)}{\Gamma(s)} \frac{\Gamma(s+\nu+1)}{\Gamma(n +\nu +1)}
\end{flalign} 
The last expression is an alternative representation of Eq.\ (\ref{CDF}).\qed
\end{proof}

The above defined discrete distribution can weight either the left edge, with $ \nu\in]-1, 0[$, or the right edge, with $ \nu\in]0, +\infty[$, of the interval $i\in[1,n]$ more strongly. For $\nu = 0$, the discrete uniform distribution is reproduced. 

The general distribution Eq.\ (\ref{ProbVector}) can be used in models that occur in the context of decisions under risk. The limited number of environmental states is sorted in ascending or descending order according to their probability of occurrence. If the discrete probability distribution is to have certain properties, for example, for a given expectation value between 1 and $n$, this distribution can be modeled with the parameter $\nu$. 

Note that the discrete probability distribution is similar but not equal to the beta-binomial distribution $BeB(n,\alpha, \beta)$ for definition cf.\ \citep{abramowitz14}. The pivotal difference is that the origin of the above distribution is based on the distribution of order statistics and not on the binomial distribution. Only for $\alpha = \nu +1 +\frac{1}{n}$ and $\beta = 1 -\frac{1}{n} $ at very large $n$ do both distributions have nearly the same properties.
\newpage
\section{Discussion and Conclusion}
To summarize the results, Fig.\ \ref{riskfunctionbild} shows the dependence of the risk function on the stress parameter.
\begin{figure}[htbp]
	\captionsetup{labelfont = bf, labelsep = none}
	\includegraphics[width=0.975\textwidth]{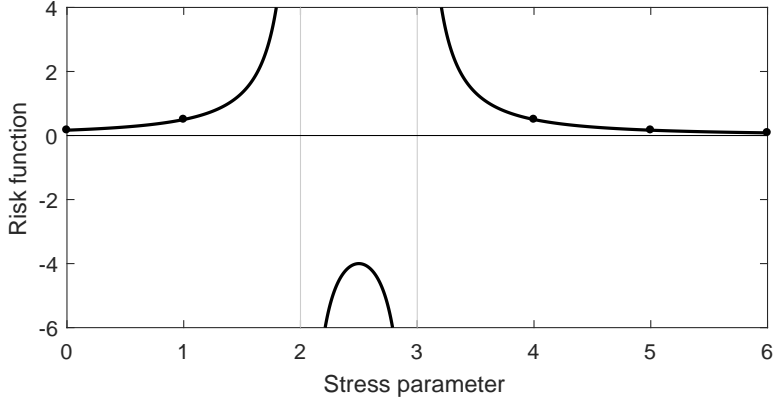}
		\caption[Sample-Size 100]{\label{riskfunctionbild} The risk function depending on the stress parameter (for $a$ or $b$). For integer values, the corresponding risk is marked with bullets. The two poles are marked with thin lines.}
\end{figure}
The risk function is symmetric about the local maximum at $a=2.5$, which also applies for the second family with parameter $b$, and it has two poles where the sign changes.
If deviations in the tail region of a distribution function are to be weighted more heavily, stress parameters that are greater than zero are a suitable choice.
Focusing only on integer values for the stress parameter, the result for $a = 2$ is surprising. Since the risk function approaches infinity for these values, the associated weighting function and corresponding statistic should not be used. In fact, during our research, we occasionally found excerpts of anonymous scripts that propose these statistics.
Because of the symmetry, the stress parameters for $a = 1$ and $a = 4$ are equivalent with respect to the risk function. Only for $a\geq 5$ can marginal improvements be achieved. 
However, in the preliminary investigations, for large exponents and small samples of financial data, the evaluation of the corresponding statistics became numerically difficult.
Therefore, we suggest using statistics with $a = 1$ proposed by \citet{ahmad88} as the basis for a goodness-of-fit test, particularly for the tail of a distribution. By doing so, similar to \citet{bader18}, a new automated method for determining the threshold value can be defined. Above the threshold, the tail of the distribution can then be modeled using the generalized Pareto distribution to calculate the required high quantiles. 

If the proposed statistic is used for $a = 1$, the average loss $R_{n, 1} = \frac{1}{2}$ is slightly larger than that for the Cram\'er-von Mises statistic ($R_{n,\mathrm{CM}} = \frac{1}{6}$) but is much smaller than that for the Anderson-Darling statistic ($R_{n,\mathrm{AD}} = 1$).
The average losses detected by the risk function can be further minimized if the empirical distribution function Eq.\ (\ref{EDFEQ}) is not used to determine the weighted mean square error. Instead, to improve the results, the empirical probability may be evaluated depending on the selected weight function, cf. \ e.g., \ \citet{ferguson67}, which again leads to new families of statistics and risk functions and opens another field of research.


\bibliographystyle{plainnat}


\end{document}